\newcommand{\qw}[1][-1]{\ar @{-} [0,#1]}
\newcommand{\qwx}[1][-1]{\ar @{-} [#1,0]}
\newcommand{\gate}[1]{*+<.6em>{#1} \POS ="i","i"+UR;"i"+UL **\dir{-};"i"+DL **\dir{-};"i"+DR **\dir{-};"i"+UR **\dir{-},"i" \qw}
\newcommand{\control}{*!<0em,.025em>-=-<.2em>{\bullet}}
\newcommand{\ctrl}[1]{\control \qwx[#1] \qw}
\newcommand{\lstick}[1]{*!R!<.5em,0em>=<0em>{#1}}
\newcommand{\ustick}[1]{*!D!<0em,-.5em>=<0em>{#1}}
\newcommand{\Qcircuit}{\xymatrix @*=<0em>}
\newtheorem{lem}{Lemma}
\newtheorem{thm}[lem]{Theorem}
\newtheorem{cor}[lem]{Corollary}
\newtheorem{ex}[lem]{Example}
\newenvironment{proof}[1][Proof]{\noindent\textbf{#1.} }{\ \rule{0.5em}{0.5em}}
\DeclareMathOperator{\CINC}{CINC}
\DeclareMathOperator{\INC}{INC}
\DeclareMathOperator{\diag}{diag}
\DeclareMathOperator{\poly}{poly}
\DeclareMathOperator{\IQP}{IQP}
\begin{document}

\begin{abstract} 
We generalize an efficient decomposition method for phase-sparse diagonal operators by Welch \textit{et al.}\ to qudit systems.
The phase-context aware method focusses on cascaded entanglers whose decomposition into multi-controlled $\INC$-gates can be optimized by the choice of a proper signed base-$d$ representation for the natural numbers.


While the gate count of the best known decomposition method for general diagonal operators on qubit systems scales with $\mathcal{O}(2^n)$, 
 the circuits synthesized by the Welch algorithm for diagonal operators with $k$ distinct phases are upper-bounded by $\mathcal{O}(n^2k)$, which is generalized to $\mathcal{O}(dn^2k)$ for the qudit case in this paper.
\end{abstract}

\title{Phase context decomposition of diagonal unitaries for higher-dimensional systems} 
\author{Kerstin Beer}
\author{Friederike Anna Dziemba}
\affiliation{Institut f\"ur Theoretische Physik, Leibniz Universit\"at Hannover, Germany}
\date{April 26, 2016}
\maketitle

\section{Introduction}

Diagonal unitary operators form a very restricted class amongst all unitary quantum gates. Despite this 
it has been proven that an efficient quantum circuit consisting of diagonal gates in the conjugate input basis cannot be efficiently classically simulated unless the polynomial hierarchy collapses at the third level \cite{IQP}. Hence the complexity class $\IQP$ (instantaneous quantum polynomial time) formed by these quantum circuits has attracted attention during the last years \cite{yoshiMurao2014, ni, fujii, matsuo}.

Indeed diagonal operators play a central role in many quantum algorithms, for example in
\begin{itemize}
\item the oracle query in Grover's algorithm \cite[6.1.2]{nielsen},
\item quantum optimization \cite{welch2},
\item the simulation of quantum dynamics \cite{zalka, kassal} and
\item decoupling -- the important primitive of quantum Shannon theory can be achieved by random diagonal unitaries \cite{decoupling1, decoupling2}.
\end{itemize}

Implementations of quantum algorithms motivates the study of decomposition methods for diagonal unitaries. Moreover such a decomposition raises interest as a subroutine within the compilation of an arbitrary quantum operation $V$ based on the
\begin{itemize}
\item spectral decomposition \cite{bullock, brennen},
\item Givens' QR method \cite{QR}: $V=QR$ with $Q$ Givens rotations and $R$ diagonal,
\item real quantum computation \cite{realSynthesis}; $V=O_1 D O_2$ with $O_1,O_2$ orthogonal and $D$ diagonal,
\item Cosine-Sine decomposition \cite{cosineSineDecomp, bullockOptimal}: $V=(U_1 \oplus U_2) W (U_3 \oplus U_4)$ with $D:= (SH \otimes \mathbb{I})W(HS^\dagger \otimes \mathbb{I})$ diagonal,
\item conditioned computation where the circuits for the conditional operations are already known up to a relative phase.
\end{itemize}

While circuits decomposing arbitrary unitaries scale with $\mathcal{O}(n^2 2^{2n})$ \cite[4.5.1 -- 4.5.2]{nielsen}, the best known compiling algorithm for diagonal unitaries provides circuits of size $\mathcal{O}(2^n)$ \cite{bullockOptimal}. The exponential growth is avoided in the setting of phase-sparse unitaries with $k$ distinct phases. This setting is studied by the decomposition method of \cite{welch}, which we generalize in this paper from qubit to qudit systems resulting in a circuit scaling of $\mathcal{O}(dn^2k)$.

Qudit systems and their advantages have been studied in \cite{diwei, greentree, lanyon}, while extensive work on the synthesis of qudit operations has been done by \cite{brennen, rosenbaum, bullock, diwei, logic}. 
Extending compiling methods to qudit systems is significant since many implementation architectures exhibit a natural qudit form.

Most of the previously mentioned algorithms containing diagonal unitary operators can be easily adapted for qudit systems. For example, higher dimensional generalizations of Grover's algorithm have been studied in \cite{groverQudits1, groverQudits2, groverQudits3, groverQudits4}. 
The aim of quantum optimization -- finding the ground state of the Hamiltonian $H=-\sum_x g(x) \ket{x}\bra{x}$ to determine the maximum of the function $g(x)$ -- requires the implementation of $e^{-i H t} = \sum_x e^{it g(x)}\ket{x}\bra{x}$, which does not favor any specific underlying dimension structure.
Using a binary encoding of the discrete grid for the originally continuous wavefunction in the simulation of quantum dynamics \cite{zalka, kassal} is not physically motivated but rather arbitrary as well. There is no obstacle in considering a qudit encoding of the grid. Only the above mentioned decoupling method is proven specifically for diagonal unitaries on qubits. But since the underlying realization of decoupling by approximate unitary 2-designs has been proven for arbitrary dimensional systems \cite{decouplingProof} it would actually be interesting to study if a sufficiently approximate 2-design can also be realized by diagonal unitaries in higher dimensional systems.

All these examples motivate to consider diagonal unitaries on general qudit structures and study suitable compiling algorithms for them.


\section{Overview of the algorithm}

Throughout this paper we use the variable $d$ for the dimension of a single qudit and the variable $n$ for the number of qudits the diagonal operator acts on.

The presented algorithm considers the \emph{phase context} of a diagonal operator by splitting it into gate blocks for each of its distinct phases. 
Each block is built from a single-qudit phase gate and two so-called \emph{cascaded entanglers} which can be decomposed into single-qudit multiplication and addition operations and $\land_1$-\ and $\land_2$-gates. The latter are defined as singly- and doubly-controlled $\INC$-operations, where $\INC$ is a single-qudit gate with $\INC\ket{t}=\ket{t+1}$ (operations on the labels of basis states are considered $\bmod{d}$ throughout this paper).

It is known that already $\land_1$, together with all single-qudit gates, forms a universal gate set for higher dimensional quantum computation \cite{brylinski}. Thus one could apply a method like \cite{brennen} to even further decompose the $\land_2$-operations, which can be regarded as higher dimensional generalization of the Toffoli-gate. But the consideration of $\land_2$-gates as elementary allows us to end the decomposition of the cascaded entanglers at a point that reveals their basic structure and which is exact even if the single-qudit operations should be restricted to some finite gate set. Experimental realizations of $\land_2$ have been proposed \cite{diwei}.

The decomposition of the cascaded entanglers into controlled $\INC$-gates is based on a number-theoretical approach. At this point the authors of \cite{welch} focus on the binary representation of integers and mention signed binary expansion as an alternative method. In this paper the central theorem is directly formulated for any signed base-$d$ expansion. In many cases this allows further reduction of the number of required $\land_1$- and $\land_2$-gates.

\section{Phase context decomposition of diagonal unitaries}

We generalize the method of \cite{welch} for the decomposition of diagonal unitary operators to qudit systems.
This method takes into consideration the \textit{phase context} of an operator, i.e. a diagonal unitary
	\begin{align*}
	U = \text{diag}(\underbrace{\phi_1,\dots\phi_1,}_{\substack{l_1}} \underbrace{\phi_2,...,\phi_2,}_{\substack{l_2}}\dots, \underbrace{\phi_k,...,\phi_k}_{\substack{l_k}})
	\end{align*}
on $n$ qudits with $k$ distinct phases is initially decomposed into a product of a global phase and $k-1$ similar operator blocks:
\begin{align*}
	U&=\phi_1 \prod_{i=1}^{k-1} \diag(1,...1,\underbrace{\phi_{i+1}/\phi_i,\dots \phi_{i+1}/\phi_i}_{\substack{\sum_{j=i+1}^k l_j}})\text{.}
	\end{align*}

For the implementation of each block
 a phase gate 
	\begin{align*}
	P(\phi)&:=\ket{0}\bra{0}+\phi\ket{1}\bra{1}+\sum_{i=2}^{d-1} \alpha_i\ket{i}\bra{i}
	\end{align*}
(with arbitrary higher phases $\alpha_i$) assigns the desired phase $\phi=\phi_{i+1}/\phi_i$ to an ancillary target qudit if and only if its initial state $\ket{0}$ was changed to $\ket{1}$ beforehand by a so-called \emph{cascaded entangler} $\CINC(l)$ which checks if the original $n$-qudit register is in a computational basis state $\ket{j}$ with $j\ge d^n - l$:
\begin{align*}
 	\CINC(l)\ket{j}\ket{t}
 	:=\begin{cases}
 	\ket{j}\ket{t+1}, & \text{if }j \ge d^n - l\\
 	\ket{j}\ket{t}, & \text{if }j < d^n - l\text{.}
 	\end{cases}
 	\end{align*}
The operator $\diag(1,\dots,1,\phi,\dots\phi)$ with the last $l$ diagonal entries having value $\phi$ can hence be realized on a $n+1$ qudit system by
 	\begin{align*}
 	 V(\phi,l)=\CINC(l)^\dagger(\mathbb{I}_n \otimes P(\phi) ) \CINC(l)\text{.}
 	 \end{align*}

In the case of qubits a cascaded entangler $\CINC(l)$ is its own inverse. For $d>2$ one can realize the inverse by $\CINC(l)^\dagger=(\mathbb{I}_n\otimes M)\CINC(l)(\mathbb{I}_n\otimes M)$ with the single-qudit multiplication gate $M\ket{t}=\ket{- t}$.

The following circuit shows the decomposition of $V(\phi,l)$ for $d=2$, $n=6$ and $l=2$:
	\begin{align*}
	\Qcircuit @C=1em @R=.7em {
	& \ctrl{6} & \qw & \ctrl{6} & \qw \\
	& \ctrl{5} & \qw & \ctrl{5} & \qw \\
	& \ctrl{4} & \qw & \ctrl{4} & \qw \\
	& \ctrl{3} & \qw & \ctrl{3} & \qw \\
	& \ctrl{2} & \qw & \ctrl{2} & \qw \\
	& \qw & \qw & \qw & \qw \\
	& \gate{\INC} & \gate{P(\phi)} & \gate{\INC^{-1}} & \qw 
	}
	\end{align*}
In this paper the most significant dit always occurs on the left of a written string and on the top of a drawn quantum circuit.

In the special case above the cascaded entangler corresponds to a single multi-controlled INC-gate (NOT-gate). This is due to the special choice of $l$ and not in general true. In the next section we will show how to decompose a general cascaded entangler into several multi-controlled INC-gates.

\section{Decomposition of cascaded entanglers}

\subsection{Multi-controlled $\INC$-operations}

The aim here is to decompose a cascaded entangler $\CINC(l)$ into elementary single-qudit gates and controlled $\INC$-operations with maximally 2 control levels. In a first step the cascaded entangler is decomposed into multi-controlled $\INC^{\pm1}$-operations that increase / decrease the ancillary target qudit iff the computational basis state $\ket{j}=\ket{j_1 \dots j_n}$ in the original $n$-qudit register corresponds in the first $m$ qudits to a specific state $\ket{b}=\ket{b_1\dots b_m}$, $b_i\in\{0,1,\dots, d-1\}$. These operations are denoted by
	\begin{align*}
	&\land^{n[b]}_m(\INC^{\pm 1})\ket{j}\ket{t}
	:=\begin{cases}
	\ket{j}\ket{t\pm 1} & \text{if } b=j_1\dots j_m\\
	\ket{j}\ket{t} & \text{otherwise.}
	\end{cases}
	\end{align*}

Notice that it is easy to change a $\land^{n[b]}_m(\INC^{-1})$-\ into a $\land^{n[b]}_m(\INC)$-operation by padding it by multiplication gates $M$ on the target qubit. By padding suitable addition gates $\INC_k\ket{t}=\ket{t+ k}$ on each control level it is furthermore possible to replace any $\land^{n[b]}_m(\INC)$-operator by a $\land^{n[11\dots 1]}_m(\INC)$-operator.
Afterwards the operator with $m>2$ control levels can be decomposed into linearly many doubly-controlled $\INC$-gates denoted by $\land_2$. The concrete parameters of the linear scaling depend on the chosen method. Here we exemplify the decomposition of a multi-controlled $\INC$-operation with $m=4$ control levels according to a generalization of the method of \cite{barenco} using $m-2$ ancilla qudits initialized to $\ket{0}$:
	\begin{align*}
	\Qcircuit @C=0.65em @R=.6em {
	&\ctrl{6}		&\qw&&		&&&\ctrl{4}					&\qw&\qw&\qw						&\ctrl{4}&\qw  \\
	&\ctrl{5}		&\qw&&		&&&\ctrl{3}					&\qw&\qw&\qw						&\ctrl{3}&\qw \\
	&\ctrl{4}		&\qw&&		&&&\qw					&\ctrl{3}&\qw&\ctrl{3}				&\qw&\qw \\
	&\ctrl{3}		&\qw&&		&&&\qw					&\qw&\ctrl{3}&\qw					&\qw&\qw \\
	&			&&\hat{=}&		&{\ket{0}}&&\gate{\INC}  	&\ctrl{1}&\qw&\ctrl{1}				&\gate{\INC^{-1}}&\qw\\
	&			&&&			&{\ket{0}}&&\qw 		&\gate{\INC}&\ctrl{1}&\gate{\INC^{-1}}	&\qw&\qw \\
	&\gate{\INC}	&\qw&&		&&&\qw 					&\qw&\gate{\INC}&\qw 				&\qw&\qw 
	}
	\end{align*}

This specific decomposition method needs $2m-3$ $\land_2$-gates. The task of the last $m-2$ gates is to set the ancilla qudits back to $\ket{0}$. We could do without them, but most of them would cancel anyway with the $\land_2$-gates belonging to the next $\land_n^{m[b]}(\INC)$-operation and, additionally, they allow us to reuse the ancilla qudits and hence to keep the number of overall ancilla qudits small.

Since it is clear how to decompose $\land_m^{n[b]}(\INC^{\pm 1})$-operations into $\land_2$-gates, it only remains to study the decomposition of a cascaded entangler $\CINC(l)$ into $\land_m^{n[b]}(\INC^{\pm 1})$-operations.

\subsection{Results from classical logic synthesis}

In the qubit case the remaining decomposition task has a classical analogue in the $\{\land, \lor, \neg\}$-synthesis of the Boolean function $\phi: \{0,1\}^n \rightarrow \{0,1\}$ corresponding to the $\CINC(l)$-operation with the ancilla qubit interpretated as output. Let $\{s_i\}_{1\le i \le l}=\{x \in \{0,1\}^n | \phi(x)=1\}$ be the set of inputs satisfying 
$\phi$. A standard procedure in classical logic synthesis is to realize the circuit by the disjunctive normal form or sum-of-product form \cite[4.3]{hachtel}
\begin{align*}
\phi = \bigvee_{i=1}^l \bigwedge_{j=1}^n \neg^{s_{i,j} \oplus 1} x_j\text{.}
\end{align*}

This general method obviously scales with $\mathcal{O}(n2^n)$, since $l\in\mathcal{O}(2^n)$ in the worst case. Many heuristic optimisation methods are known making use of Karnaugh maps, BDDs, prime implicants and more \cite[II]{hachtel} resulting in practical algorithms such as ESPRESSO. However, none of these methods is actually capable of avoiding the exponential scaling in the general case. This is not surprising since  synthesizing the optimal circuit for an arbitrary Boolean formula is NP-hard. 


Of course Boolean functions corresponding to $\CINC(l)$-operations have a particular structure. They fall for example into the class of threshold functions obeying $\phi(x_1 \dots x_n) = 1$ iff $\sum_{i=1}^n w_i x_i$ greater than some threshold. For these functions \cite{thresholdFunctions} demonstrated a polynomially sized synthesis method and a scaling of $\mathcal{O}(n^2)$ in particular examples. 

This coincides with the scaling of $\mathcal{O}(dn^2)$ obeyed by the decomposition method for cascaded entanglers presented in the next section. The method is an expansion of \cite{welch} to qudit systems but is also directly formulated with another number-theoretical degree of freedom -- the choice of a signed base-d expansion for the natural numbers.







\subsection{Decomposition of a cascaded entangler into multi-controlled $\INC$-operations}

With the generalization
        \begin{align*}
 	\CINC(p,q)\ket{j}\ket{t}
 	:=\begin{cases}
 	\ket{j}\ket{t+1} & \text{if }p\le j <q\\
 	\ket{j}\ket{t-1} & \text{if }q\le j <p\\
 	\ket{j}\ket{t} & \text{otherwise}
 	\end{cases}
 	\end{align*}
one can easily find the trivial decomposition of a cascaded entangler corresponding to the classical disjunctive normal form, namely
\begin{align*}
\CINC(l)=\prod_{i=d^n-l}^{d^n-1} \CINC(i,i+1),
\end{align*}
where each $\CINC(i,i+1)$ already equals a desired multi-controlled $\INC$-operation $\land_n^{n[i]}(\INC)$. However, the number of $l$ multi-controlled $\INC$-gates each with $n$ control levels can be significantly reduced in many cases, e.g.\ consider the operation  $\CINC(2^{n-1})$ on a qubit system. In this example the cascaded entangler corresponds already to a single controlled $\INC$-gate with the first qubit as only control level. This feature, based on the structure of the binary representation of the number $l$, is exploited in the decomposition method by \cite{welch} which we generalize in this section to qudit systems. We start with two helpful lemmata concerning the operator $\CINC(p,q)$:

\begin{lem}\label{lem:split}
$\CINC(p,q)=\CINC(p,r)\cdot\CINC(r,q)$ for any $p,q,r\in\{0,1,\dots,d^n\}$.
\end{lem}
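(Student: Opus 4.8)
The plan is to reduce the operator identity to a pointwise identity of integer-valued \emph{signed increment} functions on the computational basis, and then to observe that increments of the target qudit compose additively. First I would record the obvious fact that $\CINC(p,q)$ acts on a basis vector $\ket{j}\ket{t}$ (with $j\in\{0,\dots,d^n-1\}$) by leaving $\ket{j}$ untouched and sending $\ket{t}$ to $\ket{t+f_{p,q}(j)}$, where the signed indicator $f_{p,q}\colon\{0,\dots,d^n-1\}\to\{-1,0,+1\}$ admits the closed form
\[
f_{p,q}(j)=[\,j<q\,]-[\,j<p\,],
\]
with $[\mathcal{P}]$ equal to $1$ if the proposition $\mathcal{P}$ holds and $0$ otherwise. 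A short case check against the definition of $\CINC(p,q)$ — the cases $p\le j<q$, $q\le j<p$, and the two remaining ``outside'' cases — verifies this formula, and it is valid for all $p,q\in\{0,1,\dots,d^n\}$, including the boundary values and the degenerate case $p=q$ (where $f_{p,q}\equiv 0$). In other words, $\CINC(p,q)$ simply increments the target by the signed length of the oriented interval from $p$ to $q$.

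Next I would compose the two operators. Since $\CINC(r,q)$ fixes $\ket{j}$ and adds $f_{r,q}(j)$ to the target, and $\CINC(p,r)$ subsequently fixes $\ket{j}$ and adds $f_{p,r}(j)$, the product $\CINC(p,r)\cdot\CINC(r,q)$ sends $\ket{j}\ket{t}$ to $\ket{j}\ket{t+f_{p,r}(j)+f_{r,q}(j)}$; here I use only that repeated, inverse, or trivial applications of $\INC$ satisfy $\INC^{a}\INC^{b}=\INC^{a+b}$, with labels read $\bmod\, d$. The two signed indicators telescope,
\[
f_{p,r}(j)+f_{r,q}(j)=\big([\,j<r\,]-[\,j<p\,]\big)+\big([\,j<q\,]-[\,j<r\,]\big)=[\,j<q\,]-[\,j<p\,]=f_{p,q}(j),
\]
so $\CINC(p,r)\cdot\CINC(r,q)$ and $\CINC(p,q)$ agree on every basis vector $\ket{j}\ket{t}$ and hence, by linearity, coincide as operators.

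There is no substantial obstacle: the only step requiring a moment's care is the case analysis that establishes the closed form for $f_{p,q}$ (equivalently, recognizing $\CINC(p,q)$ as the ``signed interval counter''), after which the lemma is a one-line telescoping identity. One should also note explicitly that the interval-endpoint conventions built into the definition of $\CINC$ — half-open on the side of the smaller argument — are exactly what make the three instances $(p,q)$, $(p,r)$, $(r,q)$ consistent regardless of the relative order of $p$, $q$, $r$, which is why the statement holds for \emph{any} $r$ rather than only for $r$ between $p$ and $q$.
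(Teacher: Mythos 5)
Your proof is correct. The paper's own proof is a one-line instruction to ``directly verify the different cases depending on the order relation of $p$, $q$ and $r$'' from the definition, i.e.\ an enumeration over the possible orderings of the three parameters. Your argument is the same direct verification on basis states, but packaged differently: the closed form $f_{p,q}(j)=[\,j<q\,]-[\,j<p\,]$ (which you correctly check against the three branches of the definition of $\CINC(p,q)$) turns the entire case analysis into the single telescoping identity $f_{p,r}+f_{r,q}=f_{p,q}$, so the composition step needs no case distinctions at all. The only residual check is the one establishing the closed form, and that is a case split over the position of $j$ rather than over orderings of $p,q,r$. What this buys is a proof that manifestly holds for arbitrary $r$ --- including $r$ outside the interval between $p$ and $q$ and the degenerate situations $p=q$ or $r\in\{p,q\}$ --- without having to list them, and it also makes the auxiliary facts the paper notes ($\CINC(p,p)=\mathbb{I}_{n+1}$ and $\CINC(p,q)=\CINC(q,p)^{-1}$) immediate consequences of $f_{p,p}\equiv 0$ and $f_{q,p}=-f_{p,q}$.
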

\begin{proof}
The different cases depending on the order relation of $p$, $q$ and $r$ can all be directly verified from the definition. Notice that $\CINC(p,p)=\mathbb{I}_{n+1}$ and $\CINC(p,q)=\CINC(q,p)^{-1}$.
\end{proof}

\begin{lem}\label{lem:multiControlled}
Suppose $p=bd^m$, $b\in\mathbb{N}_0$, $q=p+d^m$ and $p,q\in\left[0,d^n\right]$. Then $\CINC(p,q)=\land_{n-m}^{n[b]}(\INC)$.
\end{lem}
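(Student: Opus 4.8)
The plan is to check that the two operators act identically on every computational basis state $\ket{j}\ket{t}$ by translating the arithmetic condition $p\le j<q$ into a statement about the base-$d$ digits of $j$. The first step is to note that, since $q=(b+1)d^m\le d^n$, we have $b\le d^{n-m}-1$, so $b$ admits a base-$d$ representation $b=b_1\dots b_{n-m}$ with exactly $n-m$ digits (padding with leading zeros if necessary). This is precisely the string appearing in the control pattern of $\land_{n-m}^{n[b]}(\INC)$, so the notation on the right-hand side is well-posed.

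Next I would write $j=j_1\dots j_n$ in base $d$ with $j_1$ most significant and observe that $j_1\dots j_{n-m}$ is the base-$d$ representation of $\lfloor j/d^m\rfloor$. Since $p=bd^m$ and $q=(b+1)d^m$, the condition $p\le j<q$ is equivalent to $\lfloor j/d^m\rfloor=b$, i.e.\ to $j_1\dots j_{n-m}=b_1\dots b_{n-m}$. Because $d^m\ge 1$ we always fall into the $p<q$ branch of the definition of $\CINC(p,q)$, so $\CINC(p,q)$ maps $\ket{j}\ket{t}\mapsto\ket{j}\ket{t+1}$ exactly when the top $n-m$ digits of $j$ spell $b$ and leaves $\ket{j}\ket{t}$ fixed otherwise — which is exactly the action of $\land_{n-m}^{n[b]}(\INC)$.

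Finally I would dispose of the degenerate cases so that the digit-string description is unambiguous: for $m=0$ one gets $p=b$, $q=b+1$ and $\land_n^{n[b]}(\INC)$, consistent with the identity $\CINC(i,i+1)=\land_n^{n[i]}(\INC)$ already recorded in the text; for $m=n$ the bound $q\le d^n$ forces $b=0$, $p=0$, $q=d^n$, and then both sides increment the target unconditionally (the control string is empty).

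Since the argument is a direct unwinding of the definitions, there is no real obstacle. The only point that needs care is the bookkeeping guaranteeing that $b$ fits into $n-m$ base-$d$ digits; once this is in place, the equivalence of the two operators follows by reading off the digits of $j$.
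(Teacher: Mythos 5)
Your proof is correct and follows essentially the same route as the paper: both arguments reduce the condition $p\le j<q$ to the statement that the leading $n-m$ base-$d$ digits of $j$ equal the $(n-m)$-digit representation of $b$ (the paper by writing out the digit strings of $p$ and $q-1$, you via $\lfloor j/d^m\rfloor=b$, which is the same observation). Your extra treatment of the edge cases $m=0$ and $m=n$ is harmless additional care, not a different method.
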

\begin{proof}
Since $q=(b+1)d^m\le d^n$, it holds $b<d^{n-m}$.
Let $b_1 b_2\dots b_{n-m}$ the $d$-ary representation of $b$. Then the $d$-ary representations of $p$ and $q-1$ turn out to be
\begin{align*}
	p=bd^m&= b_1 b_2 \dots b_{n-m} \underbrace{0\dots 0}_{\substack{m\text{ times}}}\\
	q-1=p+d^m -1&= b_1 b_2 \dots b_{n-m} \underbrace{(d-1) \dots (d-1)}_{\substack{m\text{ times}}}\text{.}
\end{align*}

$\CINC(p,q)$ increases the ancillary target qubit iff the original $n$-qudit register is found in a computational state $\ket{j}$ with $p\le j \le q-1$. According to the above $d$-ary representations this is exactly the case when the first $n-m$ qudits are in the state $\ket{b}$. Hence $\CINC(p,q)$ corresponds to a multi-controlled $\INC$-gate conditioned on the first $n-m$ qudits being in the state $\ket{b}$. This is directly the definition of $\land_{n-m}^{n[b]}(\INC)$.
\end{proof}

\begin{cor}\label{cor}
Suppose $p=bd^m$, $b\in\mathbb{N}_0$, $q=p-d^m$ and $p,q\in\left[0,d^n\right]$. Then $\CINC(p,q)=\land_{n-m}^{n[b-1]}(\INC^{-1})$.
\end{cor}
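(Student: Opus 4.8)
The plan is to deduce Corollary~\ref{cor} directly from Lemma~\ref{lem:multiControlled} together with the identity $\CINC(p,q)=\CINC(q,p)^{-1}$ already noted in the proof of Lemma~\ref{lem:split}. First I would observe that the standing assumption $q=p-d^m\in[0,d^n]$ forces $q\ge 0$, hence $p\ge d^m$, hence $b\ge 1$; so $b':=b-1$ is a well-defined element of $\mathbb{N}_0$. With this notation one has $q=(b-1)d^m=b'd^m$ and $p=q+d^m=(b'+1)d^m$, while $p,q\in[0,d^n]$ by hypothesis. Thus the triple $(q,p,b')$ meets exactly the requirements of Lemma~\ref{lem:multiControlled} -- with $q$ in the role of its ``$p$'', $p$ in the role of its ``$q$'', and $b'$ in the role of its ``$b$'' -- which gives $\CINC(q,p)=\land_{n-m}^{n[b-1]}(\INC)$.

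Next I would invert both sides. Since $\CINC(p,q)=\CINC(q,p)^{-1}$, it only remains to check that $\bigl(\land_{n-m}^{n[b-1]}(\INC)\bigr)^{-1}=\land_{n-m}^{n[b-1]}(\INC^{-1})$. This is immediate from the definition of the $\land_m^{n[b]}(\INC^{\pm1})$-operations: both act as the identity on every basis state whose first $n-m$ qudits differ from $\ket{b-1}$, and on the remaining basis states one sends $\ket{t}\mapsto\ket{t+1}$ and the other $\ket{t}\mapsto\ket{t-1}$ on the target qudit, so they are mutual inverses. Combining the two displays yields $\CINC(p,q)=\land_{n-m}^{n[b-1]}(\INC^{-1})$.

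The argument is essentially bookkeeping, and the only point that requires any care is the index shift $b\mapsto b-1$: one must confirm that the shifted parameters still satisfy the integrality constraint $b-1\in\mathbb{N}_0$ (equivalently $b\ge 1$, which is forced as above) and the range constraint needed to apply Lemma~\ref{lem:multiControlled}. As an alternative to invoking the inverse relation, one could simply rerun the $d$-ary-representation computation from the proof of Lemma~\ref{lem:multiControlled} with $q=(b-1)d^m$ and $p-1=(b-1)d^m+d^m-1$: their base-$d$ expansions show that $\{j: q\le j\le p-1\}$ is exactly the set of basis states whose first $n-m$ qudits read $\ket{b-1}$, on which $\CINC(p,q)$ decrements the ancilla -- but the reduction to Lemma~\ref{lem:multiControlled} is shorter and is the route I would take.
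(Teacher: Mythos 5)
Your proposal is correct and follows essentially the same route as the paper: the paper's own proof also sets $b'=b-1\in\mathbb{N}_0$, applies Lemma~\ref{lem:multiControlled} with the roles of $p$ and $q$ exchanged, and uses the identity $\CINC(p,q)=\CINC(q,p)^{-1}$ noted in the proof of Lemma~\ref{lem:split}. You merely spell out the bookkeeping (why $b\ge 1$, and why the inverse of $\land_{n-m}^{n[b-1]}(\INC)$ is $\land_{n-m}^{n[b-1]}(\INC^{-1})$) more explicitly than the paper does.
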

\begin{proof}
It holds $q=b' d^m$ with $b' = b-1\in\mathbb{N}_0$. Exchanging the roles of $p$ and $q$ in lemma \ref{lem:multiControlled} leads to $\CINC(p,q)=\CINC(q,p)^{-1}= \land_{n-m}^{n[b-1]}(\INC^{-1})$.
\end{proof}

The authors of \cite{welch} originally formulated their decomposition method for a cascaded entangler $\CINC(l)$ in a qubit system based on the binary representation of the parameter $l$. Later they adapted the method for a signed bit binary expansion of $l$. Here we don't just formulate the method for qudit systems but also directly for any \emph{signed base-$d$ expansion} of $l$. Such an expansion has the form $l=\sum_{i=1}^h s_i d^{m_i}$ with $0 \le m_1 \le m_2 \le \dots m_h\in\mathbb{N}_0$, $s_i=\pm 1$ and $d^n\ge\sum_{i=r}^h s_i d^{m_i}>0$ for all $1\le r \le h$. We require the bounds for the partial sums because they guarantee proper parameters for the $\CINC(p,q)$-gates used in the decomposition method.

\begin{thm}\label{thm}
Let $\sum_{i=1}^h s_i d^{m_i}$ be a signed base-$d$ expansion of $l$. Then $\CINC(l)=\prod_{i=1}^h \land_{n-m_i}^{n[b_i]}(\INC^{s_i})$.
\end{thm}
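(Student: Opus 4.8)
\ The plan is to reduce the claim entirely to Lemma~\ref{lem:split}, Lemma~\ref{lem:multiControlled} and Corollary~\ref{cor}; the signed base-$d$ expansion enters only through the choice of where to cut the interval. First I would rewrite the cascaded entangler as an interval operator: comparing the two definitions gives $\CINC(l) = \CINC(d^n - l, d^n)$. Then, writing $L_r := \sum_{i=r}^h s_i d^{m_i}$ for $1 \le r \le h+1$ (so that $L_1 = l$ and $L_{h+1} = 0$) and $c_r := d^n - L_r$, the defining bounds $0 < L_r \le d^n$ of a signed base-$d$ expansion place every $c_r$ in $[0, d^n]$, with $c_1 = d^n - l$ and $c_{h+1} = d^n$.

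Next I would telescope: applying Lemma~\ref{lem:split} $h-1$ times to $\CINC(c_1, c_{h+1})$ yields $\CINC(l) = \prod_{i=1}^h \CINC(c_i, c_{i+1})$, in exactly the left-to-right order required for the target product. It then remains to identify each factor $\CINC(c_r, c_{r+1})$ with a multi-controlled $\INC^{s_r}$-operation. The arithmetic observation driving this is that the exponents are non-decreasing, so $m_r \le m_i$ for all $i \ge r$; hence $d^{m_r}$ divides every summand of $L_r$ and therefore divides $c_r = d^n - L_r$. Write $c_r = a_r d^{m_r}$ with $a_r \in \mathbb{N}_0$. Moreover $c_{r+1} - c_r = L_r - L_{r+1} = s_r d^{m_r}$, so $c_{r+1} = c_r + s_r d^{m_r}$.

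Now I would split on the sign of $s_r$. If $s_r = +1$, Lemma~\ref{lem:multiControlled} with $p = c_r = a_r d^{m_r}$ and $q = c_{r+1} = p + d^{m_r}$ gives $\CINC(c_r, c_{r+1}) = \land_{n-m_r}^{n[a_r]}(\INC)$; if $s_r = -1$, Corollary~\ref{cor} with $p = c_r$ and $q = c_{r+1} = p - d^{m_r}$ gives $\CINC(c_r, c_{r+1}) = \land_{n-m_r}^{n[a_r-1]}(\INC^{-1})$. In both cases the control string is the $d$-ary string $b_r$ prescribed in the statement ($b_r = a_r$ when $s_r = +1$, and $b_r = a_r - 1$ when $s_r = -1$) and the target power is $\INC^{s_r}$, so collecting the factors gives the asserted identity. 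Before invoking the lemma and the corollary I would check their hypotheses against the partial-sum bounds: one has $p, q \in [0, d^n]$ throughout; for $s_r = +1$ the condition $q = (a_r+1)d^{m_r} \le d^n$ is just $L_{r+1} \ge 0$; and for $s_r = -1$ the condition $q \ge 0$, equivalently $a_r \ge 1$, follows from $L_r = L_{r+1} - d^{m_r} \le d^n - d^{m_r}$, i.e. $c_r \ge d^{m_r}$. I would also record that $s_h = +1$ and $m_h \le n$ (forced by $0 < L_h = s_h d^{m_h} \le d^n$), so that $0 \le n - m_r \le n$ for every $r$ and the $\land$-notation is well defined.

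All the substance sits in the three cited results; the only genuine work is the bookkeeping — confirming that the partial-sum constraints built into the definition of a signed base-$d$ expansion are exactly what licenses every application of Lemmas~\ref{lem:split} and~\ref{lem:multiControlled} and Corollary~\ref{cor}, and that the induced cut points carry the right power of $d$ as a factor. I expect the easiest place to slip is the indexing: keeping straight which partial sum controls which gate, and getting the $a_r$-versus-$(a_r-1)$ shift right for the negative digits.
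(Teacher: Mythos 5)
Your proposal is correct and follows essentially the same route as the paper's own proof: telescoping via Lemma~\ref{lem:split} over the cut points $d^n-\sum_{i=r}^h s_i d^{m_i}$, then identifying each factor via Lemma~\ref{lem:multiControlled} or Corollary~\ref{cor} according to the sign $s_r$. Your explicit verification of the hypotheses (divisibility of the cut points by $d^{m_r}$, the bounds forced by the partial-sum constraints, and $s_h=+1$) is somewhat more careful than the paper's, which largely asserts these facts, but the argument is the same.
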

\begin{proof}
Define $p_i:=d^n-\sum\nolimits_{r=i}^h s_r d^{m_r}$ for all $i\in\{1,2,\dots, h\}$ and $p_{h+1}:= d^n$. It obviously holds that $p_1=d^n-l$ and $p_i\in\left[0,d^n\right]$ for all $i\in\{1,2,\dots,h+1\}$. According to lemma \ref{lem:split} we can decompose
	\begin{align*}
	\CINC(l)=\CINC(p_1,p_{h+1})=\prod_{i=1}^h \CINC(p_i,p_{i+1})\text{.}
	\end{align*}

Because $p_i$ is divisible by $d^{m_i}$ for all $1\le i \le h$, we can write $p_i=b'_i d^{m_i}$ with $b'_i\in\mathbb{N}_0$ and  $p_{i+1}=p_i+s_i d^{m_i}$.
Since the requirements of lemma \ref{lem:multiControlled} and corollary \ref{cor} are fulfilled it follows
\begin{align*}
	\CINC(p_i,p_{i+1})=\land_{n-m_i}^{n[b_i]}(\INC^{s_i})
	\end{align*} 
with $b_i=b'_i$ in the case $s_i=1$ and $b_i=b'_i-1$ in the case $s_i=-1$. This completes the proof.
\end{proof}

With the previous theorem we completed the decomposition of an arbitrary diagonal operator into $\land_1$-, $\land_2$- and basic single-qudit gates. If we assume the setting of few distinct phases $k\in\mathcal{O}(\poly(n))$ and apply the previous theorem based on the standard $d$-ary expansion, the decomposition is even efficient. However, for many cascaded entanglers $\CINC(l)$ there exists an alternative signed base-$d$ expansion of $l$ leading to a decomposition into multi-controlled $\INC$-gates with a significantly smaller number of overall control levels and hence of required $\land_1$- and $\land_2$-gates due to the linear dependence. We close the section by demonstrating this in two examples:

\begin{ex}\label{ex:1}
$d=2$, $n=3$, $l=7$.
\vspace{0.5ex}

\hspace{0.1em}Standard: $l=2^0+2^1+2^2$\hspace{1.9em}Signed: $l=-2^0+2^3$
\vspace{0.5ex}
	 \begin{align*}
	 \Qcircuit @C=1em @R=.8em {
	 &\ustick{_{\left[1,2\right[}}&\ustick{_{\left[2,4\right[}}&\ustick{_{\left[4,8\right[}}&		&&		&\ustick{_{\left[0,1\right[}}&\ustick{_{\left[0,8\right[}}&\\
	 &\ctrl{3}_{\quad 0}		&\ctrl{3}_{\;\qquad 0}	&\ctrl{3}_{\;\qquad 1}	&\qw		&&		&\ctrl{3}_{\ \quad 0}		&\qw					&\qw\\
	 &\ctrl{2}_{\quad 0}		&\ctrl{2}_{\;\qquad 1}	&\qw					&\qw		&=&		&\ctrl{2}_{\ \quad 0}		&\qw					&\qw\\
	 &\ctrl{1}_{\quad 1}		&\qw 				&\qw					&\qw		&&		&\ctrl{1}_{\ \quad 0}		&\qw 				&\qw\\
	 &\gate{\INC}			&\gate{\INC}			&\gate{\INC}			&\qw		&&		&\gate{\INC^{-1}}		&\gate{\INC}			&\qw
	 } 
	 \end{align*}
\end{ex}

\begin{ex}\label{ex:2}
$d=5$, $n=4$, $l=14$.
\vspace{0.5ex}

\hspace{0.2em}Standard: $ l=4\cdot 5^0 + 2\cdot 5^1$\hspace{1.9em}
\vspace{0.5ex}
	 \begin{align*}
	 \Qcircuit @C=1em @R=.8em {
&\ustick{_{\left[611,612\right[}}&\ustick{_{\left[612,613\right[}}&\ustick{_{\left[613,614\right[}}&\ustick{_{\left[614,615\right[}}&\ustick{_{\left[615,620\right[}}&\ustick{_{\left[620,625\right[}}&\\	
	 &\ctrl{4}_{\quad 4}	&\ctrl{4}_{\;\qquad 4}	&\ctrl{3}_{\;\qquad 4}	&\ctrl{4}_{\;\qquad 4}	&\ctrl{4}_{\;\qquad 4}	&\ctrl{3}_{\;\qquad 4}	&\qw\\
	 &\ctrl{3}_{\quad 4}	&\ctrl{3}_{\;\qquad 4}	&\ctrl{3}_{\;\qquad 4}	&\ctrl{3}_{\;\qquad 4}	&\ctrl{3}_{\;\qquad 4}	&\ctrl{3}_{\;\qquad 4}	&\qw\\
	 &\ctrl{2}_{\quad 2}	&\ctrl{2}_{\;\qquad 2}	&\ctrl{2}_{\;\qquad 2}	&\ctrl{2}_{\;\qquad 2}	&\ctrl{2}_{\;\qquad 3}	&\ctrl{2}_{\;\qquad 4}	&\qw\\
	 &\ctrl{1}_{\quad 1}	&\ctrl{1}_{\;\qquad 2}	&\ctrl{1}_{\;\qquad 3}	&\ctrl{1}_{\;\qquad 4}	&\qw  				&\qw					&\qw\\
	 &\gate{\INC}	&\gate{\INC}				&\gate{\INC}			&\gate{\INC}			&\gate{\INC}			&\gate{\INC}			&\qw
	 } 
	 \end{align*}

\vspace{0.5ex}

\hspace{2.6em}Signed: $ l=-5^0-2\cdot 5^1+5^2$
\vspace{0.5ex}
	 \begin{align*}
	 \Qcircuit @C=1em @R=.8em {
	 &&\ustick{_{\left[610,611\right[}}&\ustick{_{\left[605,610\right[}}&\ustick{_{\left[600,605\right[}}&\ustick{_{\left[600,625\right[}}&\\	 
	 &&\ctrl{4}_{\ \quad 4} 		&\ctrl{4}_{\quad\qquad 4}	&\ctrl{3}_{\quad\qquad 4}	&\ctrl{4}_{\ \,\qquad 4}	&\qw\\
	 &&\ctrl{3}_{\ \quad 4}		&\ctrl{3}_{\quad\qquad 4}	&\ctrl{3}_{\quad\qquad 4}	&\ctrl{3}_{\ \,\qquad 4}	&\qw\\
	 &\lstick{=}&\ctrl{2}_{\ \quad 2}	&\ctrl{2}_{\quad\qquad 1}	&\ctrl{2}_{\quad\qquad 0}	&\qw					&\qw\\
	 &&\ctrl{1}_{\ \quad 0}		&\qw					&\qw					&\qw					&\qw\\
	 &&\gate{\INC^{-1}}			&\gate{\INC^{-1}}		&\gate{\INC^{-1}}		&\gate{\INC}			&\qw
	 } 
	 \end{align*}
\end{ex}

\section{Conclusion}

\subsection{Summary}



We generalized an efficient decomposition algorithm for diagonal phase-sparse unitaries presented in \cite{welch} to qudit systems. 
While this generalization is interesting from a number-theoretical point of view, it might also be advantageous for practical implementations since it allows decompositions into multi-controlled $\INC$-gates with less control levels compared to the qubit case (though higher-dimensional).

An advantage of the presented algorithm over other decomposition methods is its consideration of the phase context of the unitary which leads to the small number $k$ of required single-qudit phase gates. Hence the decomposition of an $n$-qubit operator with $k=2$ distinct phases 
only requires two single-qubit gates while previous methods decompose into $\Omega(2^n)$ phase gates as it was already pointed out by  \cite{welch}. 
The number of required phase gates is of particular interest since they form the accuracy dependent part in this decomposition (in contrast to the exactly decomposable cascaded entanglers) in the case the single-qudit operations are further decomposed into some approximating, eventually finite set.

In the worst case -- e.g. a signed base-d expansion $\sum_{j=0}^{n-1} (d-1) d^j$ -- a cascaded entangler is decomposed into multi-controlled $\INC^{\pm 1}$ with a total number of  $(d-1)\sum_{j=1}^n j=\mathcal{O}(dn^2)$ control levels. This results in a total number of $\mathcal{O}(dn^2k)$ $\land_1$- and $\land_2$-gates in the decomposition as well as $\mathcal{O}(dn^2k)$ many single-qubit operations. 


One accomplishment of this paper is the formulation of the decomposition algorithm based on arbitrary signed base-$d$ extensions of natural numbers which may allow a significant reduction of required gates over the standard $d$-ary extension as seen in examples \ref{ex:1} and \ref{ex:2}.

\subsection{Outlook and open questions}

It was verified by brute force that the signed base-$d$ expansions in examples \ref{ex:1} and \ref{ex:2} are indeed those that lead to the minimum of overall control levels as well as the simultaneous minimum of required $\land_1$- and $\land_2$-gates according to the presented further $\land_2$-decomposition scheme.
Unfortunately there is no efficient algorithm known for the computation of an optimal signed base-$d$ expansion. This is an open question even for the qubit case \cite{welch}. 
In the higher-dimensional case the tradeoff between the summands (multi-controlled $\INC$-gates) and their exponents (control levels) depends moreover on the qudit dimension $d$ and hence turns into an even more complicated multi-parameter optimization problem.

Of course one can at least improve the performance over the standard $d$-ary representation by considering other efficiently computable signed base-$d$ expansion schemes in comparision. In this spirit \cite{welch} propose a specific recursive algorithm and numerically confirm that it outperforms the standard binary expansion for most natural numbers. Such an algorithm is easy to adapt for the qudit case. 

It seems plausible that the way over multi-controlled $\INC$-operations leads to the minimal number of $\land_1$- and $\land_2$-gates required for the decomposition of a cascaded entangler (taking cancelling effects into consideration).
Thinking about the optimal decomposition of cascaded entanglers into multi-controlled $\INC$-operations it seems moreover intuitively reasonable to consider only those schemes
which directly correspond to a signed base-$d$ expansion of the represented number. If this intuition should be confirmed, the question of \cite{welch} about the complexity of cascaded entanglers is equivalent to the question of the optimal signed base-$d$ expansion of natural numbers.

Beyond the reduction of the problem to cascaded entanglers it remains to study other decomposition methods for diagonal unitaries under the aspect of phase-spareness in order to improve the scaling of $\mathcal{O}(dn^2k)$. A compelling but non-trivial candidate for this is the best known qubit algorithm \cite{bullockOptimal} with a scaling of $\mathcal{O}(2^n)$ without phase-context consideration.




\begin{acknowledgments}
This  work  was  supported  by  the  ERC grants QFTCMPS, and SIQS, and through the DFG by the cluster of excellence EXC 201 Quantum FQ Engineering and Space-Time Research, and the Research Training Group 1991. We thank Tobias J. Osborne for the idea and the supervision of the bachelor thesis from which the presented result originates.
\end{acknowledgments}

\bibliographystyle{plain} 

\end{document}